\newcommand{\tr}{^{\prime}}
\def\b#1{\mbox{\boldmath $#1$}}    
\def\bl#1{\mbox{\scriptsize \boldmath {$#1$}}} 
\def\cg#1{\mbox{${\cal #1}$}}      
\newcommand{\bth}{{\b\theta}}
\newcommand{\bts}{\tilde{\b\theta}}
\newcommand{\bfy}{\hbox{\boldmath$y$}}
\newcommand{\ms}{{\tilde{m}}}
\newcommand{\btd}{\bar{\b\theta}}
\renewcommand{\th}{\theta}
\newcommand{\si}{\sigma}
\newtheorem{theorem}{Theorem}          
\begin{document}

\title{A generalized multiple-try version of the
Reversible Jump algorithm}


\author{Silvia Pandolfi\footnote{Department of Economics, Finance and
Statistics, University of Perugia, Via A. Pascoli, 20, 06123
Perugia.}  \footnote{E-mail: pandolfi@stat.unipg.it} \and Francesco Bartolucci$^*$ \and Nial
Friel\footnote{INSIGHT: The National Centre for Big Data Analytics, School of Mathematical Sciences, University College
Dublin, IE} }
\date{}
\maketitle \vspace*{-0.5cm}

\begin{abstract}
\begin{singlespace}
The Reversible Jump algorithm is one of the most widely used Markov
chain Monte Carlo algorithms for Bayesian estimation and model
selection. A generalized multiple-try version of this
algorithm is proposed. The algorithm is based on drawing several proposals at each step
and randomly choosing one of them on the basis of weights (selection
probabilities) that may be arbitrary chosen. Among the
possible choices, a method is employed which is based on selection probabilities
depending on a quadratic approximation of the posterior
distribution. Moreover, the implementation of the
proposed algorithm for challenging model selection problems, in
which the quadratic approximation is not feasible, is considered. The resulting
algorithm leads to a gain in efficiency with respect to the
Reversible Jump algorithm, and also in terms of computational
effort. The performance of this approach is illustrated for real
examples involving a logistic regression model and a latent class
model.
\end{singlespace}
\noindent \vskip7mm \noindent {\sc Keywords:} 
Bayesian inference, Latent class model, Logistic model,
Markov chain Monte Carlo, Metropolis-Hastings algorithm.
\end{abstract}\newpage


\section{Introduction}\label{sec:1} 

Markov chain Monte Carlo (MCMC) methods have had a profound impact
on Bayesian inference. In variable dimension problems, which mainly
arise in the context of Bayesian model selection, a well-known
approach is the Reversible Jump (RJ) algorithm proposed by
\cite{Green:95}. The algorithm uses the Metropolis-Hastings (MH)
paradigm \citep{metrop:53, hast:70} in order to generate a
reversible Markov chain which jumps between models with
different parameter space dimensions. These jumps are achieved by proposing a
move to a different model, and accepting it with appropriate
probability in order to ensure that the chain has the required
stationary distribution. However, the algorithm presents some
potential drawbacks that may limit its applicability. Ideally, the
proposed moves are designed so that the different models are
adequately explored. However, the efficient construction of these
moves may be difficult because, in general, there is no natural way to
choose jump proposals \citep[see, among others,][]{green:03}.

Several approaches have been proposed in literature in order to improve the
efficiency of the RJ algorithm. 
An interesting modification of the MH algorithm is
the Delayed Rejection (DR)
method, proposed by \cite{tier:mira:99} and extended to the RJ setting by \cite{green:mira:01}. The method is based on a modified
between-model move, conditional on the rejection of  
the initial trial. In particular, if a proposal is rejected, 
a second move is attempted and it is accepted with a probability
that takes into account the rejected first proposal, in a way that satisfy the 
detailed balance condition. Obviously, the efficiency improvements of the two-stage proposal
needs to be weighed against the increased computational cost.

Moreover, \cite{brooks_giudici:03} proposed two main
classes of methods. The first class explores the idea to automatically  
scale the parameters of the jump proposal distribution by examining a
Taylor series expansion of the Hastings ratio as a function of the
parameters of the proposal distribution. The broad idea is that
first and second order (and possibly higher order) terms in the
Taylor expansion are set equal to zero, giving a system of equations
that are solved to yield the optimal proposal parameters. The
rationale for doing this is that it should lead to higher acceptance
probabilities, thereby improving the ability of the
sampler to move between models. However, for many statistical models,
generating such a Taylor expansion and solving first and second
derivatives is analytically unavailable, as is the case 
of the latent class (LC) model  considered in this paper. The
second approach proposed in \cite{brooks_giudici:03}, termed the
saturated space approach, develops the idea of augmenting the 
state space with auxiliary variables (to ensure that all models
have the same dimension as the largest one) in order to
allow the chain to have the same memory of the states visited in other
models, increasing the efficiency of the proposals.

Other approaches include the automatic RJ sampler by
\cite{hastie:05}. This approach requires a pilot run for each model under  
consideration in order to learn about the posterior distribution
within each model. This information is then used inside a RJ
algorithm to tune proposal parameters when jumping between models.
Clearly this comes at a high computational cost, particularly when the
model dimension is large.
In a similar vein, 
\cite{lunn:et:al:09} developed an inferential
framework in which the BUGS software \citep{spieg:et:al:96} 
can be used to carry out RJ
inference. The main constraint here is that the full-conditional
distributions for the parameters are available in closed form within each model.
Moreover, \cite{fan:et:al:09} approached the issue of constructing
proposals for between model moves by estimating particular marginal
densities based on MCMC draws from the posterior, using path
sampling. 
In more detail, suppose that the parameter vectors within the
models can be partitioned so that a subset of them can be held
constant when moving between models.  
When a between model move is proposed, the new parameters are drawn from a proposal distribution
which is conditioned upon the subset of previously sampled
parameters. The main computational burden is to actually draw from
this conditional distribution, especially when the parameter space
is high dimensional.
This is the major drawback of the approach of  \cite{fan:et:al:09}. 
Furthermore, note that population MCMC, 
whereby a target distribution is constructed consisting of a product of
tempered versions of the target distribution of interest, has also
been developed for RJ \citep{jasra:et:al:07}. The idea here is that the
collection of states of the population of the Markov chain at any
given iteration can be used to give some guidance for selecting
parameters of the proposal distribution. But also the effect of
tempering is to allow efficient exploration of a potentially
multi-modal target distribution. The main drawback is that only one
particular Markov chain in the population (with temperature equal to
$1$) is used for inferential purposes. The remaining chains
serve to facilitate mixing within and between models.
Finally, another interesting approach was proposed by
\cite{bart_scac_mir:06}, which consists of 
employing in a more efficient way the output of an RJ algorithm implemented in the usual
way in order to construct a class of efficient estimators of the Bayes factor. For a review of the main methodological extensions of the RJ algorithm see also
\cite{fan:siss:11} and \cite{green:hast:12}.

With the aim of improving the performance of the RJ algorithm, 
in this paper we extend the results illustrated in
\cite{pand:et:al:10} in which a generalization of the Multiple-Try
Metropolis (MTM) algorithm of \cite{llw:00} is proposed in the
context of Bayesian estimation and Bayesian model choice. In
particular we develop their idea of applying a multiple-try strategy
to increase the efficiency of the RJ algorithm from a Bayesian model selection perspective,
where the dimensionality of the parameter space is also part of the
model uncertainty.

In general, the MTM algorithm represents an extension of the MH
algorithm consisting of drawing, at each step, a certain number of
trial proposals and then selecting one of them with a suitable
probability. The selection probabilities of each proposed value are
constrained so as to attain the {\em detailed balance condition}. In
particular, \cite{llw:00} proposed a rule to choose these
probabilities so that they are proportional to the product of the
target, the proposal, and a function which is non-negative and
symmetric. The generalization of the multiple-try scheme proposed by
\cite{pand:et:al:10}, hereafter denoted by GMTM, defines the
selection probabilities in a more general way. Under this approach,
minimal constraints are required to attain the detailed balance
condition. In principle, any mathematical function giving valid
probabilities may be adopted to select among the proposed trials 
although the efficiency in the estimation of the target
distribution may depend on this choice.

In the Bayesian model choice context, the GMTM extension of the RJ
algorithm represents a rather natural way to overcome some of the
typical problems of this algorithm, as for example the necessity of an accurate tuning of the jump proposals.
The extension consists of
proposing, at each step, a fixed number of moves, so as to promote mixing among models. 
In particular, among the
possible ways to compute the selection probabilities, we suggest a
method based on a quadratic approximation of the target distribution
that may lead to a considerable saving of computing time. Moreover,
we show that, when it is not possible to easily compute this
quadratic approximation, the generalized version may again lead to
an efficient algorithm. It is also worth noting that the proposed extension of the RJ
algorithm has several analogies with the DR
method of \cite{green:mira:01},
in which the different trial proposals are attempted only conditionally to the rejection of the first one.
Given these similarities, this method may be easily adapted for a direct comparison
with the proposed approach, as we illustrate in this paper.

The remainder of the article is structured as follows. In Section \ref{sec:2} we review the MH
algorithm and the RJ algorithm and we introduce the basic concept of
the GMTM algorithm for Bayesian estimation. In Section \ref{sec:3} we outline
the generalized multiple-try version of the RJ algorithm with a
discussion on some convenient choices of the selection
probabilities. The proposed approach is illustrated in Section \ref{sec:4} by
some empirical experiments, whereas Section \ref{sec:5} provides main conclusions.

\section{Preliminaries}\label{sec:2}

We first introduce some basic notation for the MH
and the RJ algorithms and we briefly review the GMTM method as a
generalization of the MTM algorithm.

\subsection{Metropolis-Hastings and Reversible Jump algorithms}

The MH algorithm, proposed by \cite{metrop:53} and modified by
\cite{hast:70}, is one of the best known MCMC method 
to generate a random sample from a target distribution $\pi(\b
\theta)$. The basic idea of
this algorithm is to construct an ergodic Markov chain in the state
space of $\b \theta$ that has $\pi(\b\theta)$ as stationary
distribution.

In particular, given the current state $\bth$, the proposed value of
the next state of the chain, denoted by 
$\bts$, is drawn from a {\em proposal distribution}
 $T(\bth,\bts)$ and it is accepted with probability
$$
\alpha = \min
\left\{1,\frac{\pi(\bts)T(\bts,\bth)}{\pi(\bth)T(\bth,\bts)}\right\}.
$$
The MH Markov chain is reversible and 
with invariant/stationary density
$\pi(\b\theta)$, because it satisfies the {\em detailed balance
condition} $\pi(\b\theta)P(\b\theta,\bts)=\pi(\bts)P(\bts,
\b\theta)$ 
for every $(\b\theta,\bts)$, where $P(\b\theta,\bts)$ is
the transition kernel density from $\b\theta$ to $\bts$.

In the Bayesian model choice context, the MH algorithm was extended
by \cite{Green:95}, resulting in the RJ algorithm, so as to allow
so-called {\em across-}model  simulation of posterior distributions on spaces of varying
dimensions. 
Let $\{\cg M_1,\ldots,\cg M_M\}$ denote the set of
available models and let $\b\Theta_m$ be the parameter space of
model $\cg M_m$, 
the elements of which are denoted by $\b\theta_m$.
Also let $L(\bfy|m,\b\theta_m)$ be the likelihood for an observed
sample $\bfy$, let $p(\b\theta_m|m)$ be the prior distribution of the
parameters, and let $p(m)$ be the prior probability of model $\cg M_m$.

In simulating from the target distribution, a sampler must move both
within and between models. Moreover, the move from the current state
of Markov chain $(m,\b\theta_m)$ to a new state $(\ms,\bts_\ms)$
has to be performed so as to ensure that the detailed balance
condition holds. The solution proposed by \cite{Green:95} is to
supplement each of the parameter spaces $\b\Theta_m$ and
$\b\Theta_\ms$ with artificial spaces in order to create a
bijection between them and to impose a dimension matching condition;
see also \cite{brooks_giudici:03}.

In particular, let $(m,\b\theta_m)$ be the current state of Markov
chain, where $\b\theta_m$ has dimension $d(\b\theta_m)$; the RJ
algorithm performs the following steps:
\begin{description}
\item[Step 1:] Select a new candidate model $\mathcal{M}_\ms$ with
probability $h(m,\ms)$.
\item[Step 2:] Generate the auxiliary variable $\b u_\ms$ (which can be of lower dimension than $\b
\theta_\ms$) from a specified proposal density $T_{m,\ms}(\b\theta_m, \b
u_\ms)$.
\item[Step 3:] Set
$(\bts_\ms,\b u_m)=g_{m,\ms}(\b\theta_m,\b u_\ms)$, where
$g_{m,\ms}(\b\theta_m,\b u_\ms)$ is an invertible
function such that $d(\b\theta_m)+d(\b u_\ms) = d(\bts_\ms)+d(\b u_m)$.
\item[Step 4:] Accept the proposed model and the corresponding
parameters vector with probability
\begin{displaymath}
\alpha = \min\left\{1,\frac{\pi(\ms,\bts_\ms)\, h(\ms,m)\,
T_{\ms,m}(\bts_\ms,\b u_m)}{\pi(m,\b
\theta_m)\,h(m,\ms)\,T_{m,\ms}(\b\theta_{m},\b u_\ms)}
\left| \b J(\b\theta_m,\b u_\ms)\right|\right\},
\end{displaymath}
where
$\pi(m,\b\theta_{m})=L(\bfy|m,\b\theta_{m})\,p(\b\theta_{m}|m)\,p(m)$
and the last term is the Jacobian determinant of the transformation
$g_{m,\ms}(\b\theta_m,\b u_\ms)$, that is,
$$\left|\b J(\b\theta_m,\b u_\ms)\right|
=\left| \frac{\partial g_{m,\ms}(\b\theta_m,\b u_\ms)}{\partial
(\b \theta_m,\b u_\ms)}\right|.$$
\end{description}

The main difficulty in the implementation of the RJ algorithm is the
construction of an efficient proposal that jumps between models.
In fact, inefficient proposal mechanisms could 
result in Markov chains that are slow to explore the state space and consequently to
converge to the stationary distribution. Generally, in order to
ensure efficient proposal steps, the proposed new state should have
similar posterior support to the existing state. This ensures that both the current move and its reverse counterpart 
have a good chance of begin accepted \citep{green:hast:12}. 

In addition to the RJ algorithm, several alternative MCMC approaches  
have been proposed in Bayesian model and variable selection
contexts. These methods are based on the estimation of the posterior
probabilities of the available models or on the estimation of
marginal likelihoods; for a review see \cite{han:carl:01},
\cite{Dellap_For:02}, \cite{green:03}, and \cite{fri:wyse12}.

The RJ algorithm has been applied, in particular, 
to the Bayesian analysis of data from a finite mixture distribution with an unknown number of
components \citep{rich_green:97}. This approach is based on a series
of transdimensional moves (i.e., split-combine and birth-death
moves) that permit joint estimation of the parameters and the number
of components; see also \cite{steph:00} for a continuous time
version of the RJ algorithm for finite mixture models. More recently, \cite{zhang:et:al:04} 
proposed an application of the RJ algorithm to multivariate
gaussian mixture models, whereas \cite{liu:et:al:11} illustrated the use of the algorithm
for bayesian analysis of the patterns of biological susceptibility 
on the basis of univariate normal mixtures. 
Other applications of the RJ algorithm concern a nonparametric estimation of diffusion processes
\citep{vander:et:al:13}, whereas
\cite{lop:west:04} developed an RJ algorithm in the context
of factor analysis in which there is uncertainty about the number of
latent factors in a multivariate factor model. In this situation,
the number of factors is treated as unknown. The \cite{lop:west:04}
method builds a preliminary sets of parallel MCMC samples obtained
under different number of factors. Then, it employs these samples to
generate empirical proposal distributions to be used in the RJ
algorithm.

\subsection{Multiple-try and generalized multiple-try methods}
The MTM proposed by \cite{llw:00} represents an extension of the MH
algorithm, which consists of proposing, at each step, a fixed
number $k$ of moves, $\bts^{(1)},\ldots,\bts^{(k)}$, from
$T(\bth,\bts)$ and then selecting one of them with 
probability proportional to
\begin{equation}\label{eq:mtm}
w(\bth,\bts^{(j)}) =
\pi(\bts^{(j)})T(\bts^{(j)},\bth)\lambda(\bts^{(j)},\bth),\quad
j=1,\ldots,k,
\end{equation}
where $\lambda(\bts,\b\theta)$ is an arbitrary non-negative
symmetric function. The probabilities are formulated so as to attain
the detailed balance condition. Several special cases of this
algorithm are possible, the most interesting of which is when
$\lambda(\bts,\b\theta) = \left[ T(\b\theta,\bts)T(\bts,\b\theta)
\right]^{-1}$; we refer to this version of the algorithm as
MTM-inv. Another interesting choice is $\lambda(\bts,\b\theta)=1$,
which leads to the MTM-I algorithm.

The key innovation of the generalized MTM algorithm (GMTM),
introduced by \cite{pand:et:al:10}, is that the selection
probabilities of the proposed trial set are not constrained as in
(\ref{eq:mtm}). The evaluation of these selection probabilities
could in fact be computationally intensive because it requires the
computation of the target distribution for each proposed value of
the mutliple-try scheme. In the GMTM algorithm, the selection probabilities
are instead proportional to a given {\em weighting function}
$w^*(\b\theta,\bts)$ that can be easily computed, so as to
increase the number of multiple trials without loss of efficiency.
This implies a different rule to compute the acceptance probability,
which generalizes the one proposed by \cite{llw:00} for the original
MTM method.  

\subsubsection{The GMTM algorithm}\label{sec:GMTM}

Let $w^*(\b\theta,\bts)$ be an arbitrary weighting function
which is strictly positive for all $\b\theta$ and $\bts$. Let $\bth$
be the current state of Markov chain; the GMTM
algorithm performs the following step:
\begin{description} 
\item[Step 1:] Draw $k$ trial proposals $\bts^{(1)},\dots,\bts^{(k)}$ from a proposal
distribution $T(\bth,\bts)$.
\item[Step 2:] Select a point $\bts$ from the set $\{\bts^{(1)},\dots,\bts^{(k)}\}$ with probability
 $$p(\bth,\bts) =
\frac{w^*(\bth,\bts)}{\sum_{j=1}^k
w^*(\bth,\bts^{(j)})}.$$
\item[Step 3:] Draw realizations $\btd^{(1)},\dots,\btd^{({k-1})}$
from the distribution $T(\bts,\btd)$ and set $\btd^{(k)}=\bth$.
\item[Step 4:]Define
\[\displaystyle{ p(\bts,\bth) =
\frac{w^*(\bts,\bth)}{\sum_{j=1}^k
w^*(\bts,\btd^{(j)})}.} \]
\item[Step 5:] The transition from $\bth$ to $\bts$ is
accepted with probability
\[
\alpha = \min\left\{ 1,
\frac{\pi(\bts)T(\bts,\bth)p(\bts,\bth)}
{\pi(\bth)T(\bth,\bts)p(\bth,\bts)} \right\}.
\]
\end{description}

The MTM algorithm of \cite{llw:00} can be viewed as a special case
of the GMTM algorithm. In particular:
\begin{enumerate}
\item If $w^*(\b\theta,\bts) = \pi(\bts)T(\bts,\b\theta)$, the algorithm corresponds to the MTM-I scheme
of \cite{llw:00}, with $\lambda(\bts,\b\theta) = 1$.
\item If $\displaystyle w^*(\b\theta,\bts) = \frac{\pi(\bts)}{T(\b\theta,\bts)}$, the algorithm
corresponds to the MTM-inv scheme of \cite{llw:00} based on
$\lambda(\bts,\b\theta) =  \left[ T(\b\theta,\bts)T(\bts,\b\theta)\right]^{-1}$.
\item If $\displaystyle w^*(\b\theta,\bts) = \frac{\pi^*(\bts)}{T(\b\theta,\bts)}$, where
$\pi^*(\bts)$ is given by a quadratic approximation of the target distribution,
the GMTM considered in \cite{pand:et:al:10} results. We term this
scheme as GMTM-quad.
\end{enumerate}

Our main interest is to explore situations where the weighting
function is easy to compute so as to increase the efficiency of the
algorithm. Regarding the GMTM-quad algorithm, the quadratic
approximation of the target distribution on which this algorithm is based has expression 
\begin{equation}
\pi^*(\bts) =
\pi(\b\theta)\exp\left[\b
s(\b\theta)\tr(\bts-\bth)+\frac{1}{2}(\bts-\bth)\tr\b
D(\bth)(\bts-\bth)\right],\label{eqn:quad_approx}
\end{equation}
where $\b s(\b \theta)$ and $\b D(\b \theta)$ 
correspond to the first and
second derivatives of $\log \pi(\b \theta)$ with respect to $\b
\theta$, respectively. Then, in the computation of the selection
probabilities we find an expression that does not require the
evaluation of the target distribution for each proposed value,
thereby saving much computing time.

\section{Generalized multiple-try version of the Reversible Jump algorithm }\label{sec:3}
The GMTM algorithm may be extended to
improve the RJ algorithm so as to
develop simultaneous inference on both model and parameter
space. The resulting Generalized Multiple-Try Reversible Jump (GMTRJ) algorithm allows us
to address some of the typical drawbacks of the RJ algorithm, first
of all the necessity of an accurate tuning of the jump proposals in
order to promote mixing among models. The extension consists of
proposing, at each step, a fixed number of moves, so as to improve
the performance of the algorithm and to increase the efficiency from
a Bayesian model selection perspective. 

\subsection{The GMTRJ algorithm}

Suppose the Markov chain currently visits model $\cg M_m$ with
parameters $\b\theta_m$ and let $w_{m,\ms}^*(\b\th_m,\bts_\ms)$ be
the weighting function, which is strictly positive for all $m$,
$\ms$, $\b\th_m$, and $\bts_\ms$. The proposed strategy is based on
the following steps:
\begin{description}
\item[Step 1:] Select a new candidate model $\cg M_\ms$ with
probability $h(m,\ms)$.
\item[Step 2:] For $j=1,\dots,k$, generate auxiliary variables $\b u^{(j)}_{\ms}$ from a specified density
$T_{m,\ms}(\bth_m,\b u^{(j)}_{\ms})$.

\item[Step 3:] For $j=1,\dots,k$, set $(\bts^{(j)}_{\ms},\b u_{m}) = g_{m,\ms}(\bth_m,\b u^{(j)}_{\ms})$,
where $g_{m,\ms}(\bth_m,\b u^{(j)}_{\ms})$ is a specified invertible
function, such that $d(\bth_m)+d(\b u^{(j)}_{\ms}) =
d(\bts^{(j)}_{\ms})+d(\b u_{m})$.

\item[Step 4:] Choose $\bts_{\tilde{m}}$ from
$\{ \bts^{(1)}_{\tilde{m}},\dots,\bts^{(k)}_{\tilde{m}} \}$ with
probability
\begin{equation}\label{eq:prob_sel}
p_{m,\ms}(\bth_m,\bts_{\tilde{m}}) = \frac{
w_{m,\ms}^*(\bth_m,\bts_{\tilde{m}}) }{\sum_{j=1}^k
w_{m,\ms}^*(\bth_m,\bts_{\tilde{m}}^{(j)}) }.
\end{equation}

\item[Step 5:] For $j=1,\dots,k-1$, generate auxiliary variables $\bar{\b u}^{(j)}_{m}$ from the density
$T_{\tilde{m},m}(\bts_{\tilde{m}},\bar{\b u}^{(j)}_{m})$.

\item[Step 6:] For $j=1,\dots,k-1$, set
$(\btd^{(j)}_{m},\bar{\b u}_{\tilde{m}}) =
g_{\tilde{m},m}(\bts_{\tilde{m}},\bar{\b u}^{(j)}_{m})$, where, the
function $g_{\tilde{m},m}(\bts_{\tilde{m}},\bar{\b u}^{(j)}_{m})$ is
specified as in Step 3; set $\btd^{(k)}_{m} = \bth_m$ and $\bar{\b
u}^{(k)}_{m} = \b u_m$.

\item[Step 7:] Define
\begin{equation}\label{eq:prob_sel2}
 p_{\ms,m}(\bts_{\tilde{m}},\bth_m) = \frac{ w_{\ms,m}^*(
\bts_{\ms},\bth_m) }{\sum_{j=1}^k
w_{\ms,m}^*(\bts_{\ms},\btd^{(j)}_{m}) }.
\end{equation}

\item[Step 8:] Accept the move from
$(m,\b\theta_m)$ to $(\ms,\bts_{\tilde{m}})$ with probability
\[
\hspace*{-0.5cm}\alpha = \min\left\{1,
\frac{\pi(\tilde{m},\bts_{\tilde{m}})\;
h(\tilde{m},m)\,T_{\tilde{m},m}(\bts_{\tilde{m}},\b u_m)
p_{\ms,m}(\bts_{\tilde{m}},\bth_m)}
{\pi(m,\b\theta_m)\;h(m,\tilde{m})\,T_{m,\tilde{m}}(\bth_m,\b
u_{\tilde{m}})p_{m,\ms}(\bth_m,\bts_{\tilde{m}}) }| \b J(\b\theta_m,\b
u_{\tilde{m}})| \right\},
\]
where $\pi(m,\bth_{m}) =L(\bfy|m,\bth_m)\,p(\bth_m|m)\,p(m)$ and
$| \b J(\b\theta_m,\b u_{\tilde{m}})|$ is again the Jacobian determinant
of the transformation from the current value of the parameters to
the new value.
\end{description}
It is possible to prove that the GMTRJ algorithm satisfies the {\em
detailed balance condition}; see Theorem \ref{teo:GMTRJ} in \ref{sec:A}. 
Moreover, a variant of this algorithm may be based on
independently drawing, at Step 1, $k$ candidate models, which are
denoted by $\{\cg M_{\ms^{(1)}},\dots, \cg M_{\ms^{(k)}}\}$. Then,
for each of these models, a specific parameter vector is drawn from
the corresponding distribution $T_{m,\ms^{(j)}}(\bth_m,\b u_\ms^{(j)})$
and a pair $(\ms,\bts_\ms)$ is selected on the basis of a
probability function similar to (\ref{eq:prob_sel}). Obviously, the
backward probabilities in (\ref{eq:prob_sel2}), which are used in
the acceptance rule, must be modified accordingly.

\subsection{Choice of the weighting function} \label{quad_approx}
The choice of the weighting function
$w_{m,\ms}^*(\bth_m,\bts_{\ms})$ may be relevant for an efficient
construction of the jump proposal. In fact, using an appropriately
chosen weighting function,
it may be possible to construct an algorithm that is easy to
implement, with a good acceptance rate, together with a gain of
efficiency.

Following the scheme illustrated in Section~\ref{sec:GMTM} for the
Bayesian estimation framework, it is possible consider some special
cases of the GMTRJ algorithm:
\begin{enumerate}
\item $w_{m,\ms}^*(\bth_m,\bts_{\ms}) = \pi(\ms,\bts_{\tilde{m}})T_{\ms,m}(\bts_\ms,\b
u_m)$, which gives rise to the GMTRJ-I scheme.
\item $\displaystyle w_{m,\ms}^*(\bth_m,\bts_{\ms}) =
\frac{\pi(\ms,\bts_{\tilde{m}})}{T_{m,\ms}(\bth_m,\b u_{\ms})}$,
which corresponds to the GMTRJ-inv scheme.
\item $\displaystyle w_{m,\ms}^*(\bth_m,\bts_{\ms}) =
\frac{\pi^*(\ms,\bts_{\tilde{m}})}{T_{m,\ms}(\bth_m,\b
u_{\tilde{m}})}$, where $\pi^*(\ms,\bts_{\tilde{m}})$ is a
quadratic approximation of the target distribution, similar to
(\ref{eqn:quad_approx}); this gives rise to the GMTRJ-quad scheme.
The quadratic approximation is possible when the parameters within a
particular model are continuous and the parameters space is a subset of $R^{d(\tilde{\bl
\theta}_\ms)}$. 
%
\item In certain situations it may not be possible to derive the quadratic approximation of the target distribution,
but it is still possible to find a suitable function that allows us
to simplify the computations. We illustrate this case in
Section~\ref{sec:lc} for the Bayesian model selection of the number
of unknown classes in an LC model.
\end{enumerate}

\section{Empirical illustrations}\label{sec:4}
We illustrate the proposed GMTRJ approach 
through three different examples in the
Bayesian model selection context. The first 
is a simple example on the use of the quadratic approximation of the model likelihood as a
selection probability for the proposed trials in the multiple-try strategy. In
particular, the example concerns estimation
of the posterior probabilities of three models under comparison
for the well-known Darwin's data \citep{box:tiao:92}. The
second example concerns selection of covariates in a logistic
regression model, whereas the third one involves the choice of the
number of components of an LC model. The logistic regression example
has already been illustrated in some detail by \cite{pand:et:al:10}, 
but we report more extended results here.

\subsection{Bayesian model comparison: the Darwin's data}

The first example is based on the Darwin's data \citep{box:tiao:92}, 
which concern the difference in height of matched
cross-fertilized and self-fertilized plants. 
The data, $y_i$, correspond to the following differences from 15
plants pairs (in inches):
$$
- 67, \;-48,\; 6,\; 8,\; 14,\; 16,\; 23,\; 24,\; 28,\; 29,\; 41,\;
49,\; 56,\; 60,\; 75,
$$
and represent an often cited
example of distortion in the univariate Normal parameters, caused by
potentially outlying points. 

For these data, we implemented an RJ algorithm for jumping between
three models
\begin{itemize}
\item $\cg M_1: Y \sim N(\mu, \sigma^2)$; 
\item $\cg M_2: Y \sim
t_r(\mu,\sigma^2)$;
\item $\cg M_3: Y \sim SN(\mu,\sigma^2,\phi)$,
\end{itemize} 
so as to estimate the corresponding posterior model probabilities. In
the above expressions, $t_r(\mu,\sigma^2)$ denotes the Student-$t$ distribution with
$r$ degrees of freedom and location and scale parameters 
given by $\mu$ and $\sigma^2$, respectively. 
Moreover, $SN(\mu,\sigma^2,\phi)$ denotes a
Skew Normal distribution with location parameter $\mu$, scale
parameter $\sigma^2$, and shape parameter $\phi$. 
The Skew Normal
distribution \citep{azzalini1985} generalizes the Normal distribution to allow for
non-zero skewness. 
In particular, the Normal distribution arises when $\phi=0$, whereas the 
(positive or negative) skewness increases with the absolute value of $\phi$.

{\em A priori,}  we assumed a Normal distribution for the parameter $\mu
\sim N(\xi,\tau)$ and an Inverse Gamma distribution for the
parameter $\sigma^2 \sim IG(\alpha,\beta)$. We also treated the
degrees of freedom $r$ as an unknown parameter to be estimated within the RJ algorithm.
For this parameter we defined a discrete Uniform prior distribution
between 1 and $r_{\max}$, where $r_{\max}$ is the maximum number of
degrees of freedom we define {\em a priori}.  Also note that, for $r=1$, the
Student-$t$ distribution corresponds to a Cauchy distribution with
parameters $\mu$ and $\sigma^2$ that is in the class of stable
distributions with heavy tails.

Every sweep of the implemented  algorithm consists of 
an MH move, aimed at updating the parameters given the current model, and a transdimensional
move, aimed at jumping between the different models. 
When the current model is, for example, $\cg M_1$, the
transdimensional move consists of proposing a jump to model $\cg
M_2$, with a given value of $r$ that is also randomly selected, or
to model $\cg M_3$, with the same probability. In the end, it is
possible to compute the posterior probabilities of all the models
under comparison, also considering the different values of $r$.
For this aim, the parameters within the proposed model,
both in the MH move and in the transdimensional move, 
are drawn from a function $T(\cdot,\cdot)$ 
corresponding here to the prior distribution. As a result, the acceptance probabilities
may be computed in a simplified way. 

We also implemented the GMTRJ-quad algorithm, which is based on 
drawing a number $k$ of different values of the parameters under the proposed model in the
transdimensional move, and selecting one of them with a probability
proportional to the quadratic approximation of the model likelihood similar to (\ref{eqn:quad_approx}).

For Darwin's data, we considered a Student-$t$ distribution with
$r_{\max} = 10$ degrees of freedom. Moreover, we considered a shape
parameter $\phi=1$, so that the distribution is right skewed. 
For the prior hyperparameters we set $\xi = 0$, $\tau = R$,
$\alpha = 2$ and $\beta = R^2/50$ \citep[see, among others,][Section 2.3.2,
for an alternative application]{congdon:03}, 
where $R = 142$ is the length of the
interval of variation of the data.
We applied the GMTRJ-quad
algorithm with a size of the proposal trial set equal to
$k = 5, 10, 20$ and we ran the Markov chain for 200,000
iterations, discarding the first 40,000 as burn-in.

From the results of the application, which are reported in
Table~\ref{tab:mod_prob} in terms of estimated posterior
probabilities of the models under comparison, we 
observe that for
both the algorithms the model with the highest posterior
probabilities is the Student-$t$ model, $\cg M_2$, with $r=2$ degrees
of freedom.
\begin{table}[!h]\centering
\small{\vspace*{0.25cm}
\begin{tabular}{llcccc}
\toprule
& &RJ  & GMTRJ-quad & GMTRJ-quad  & GMTRJ-quad   \\
& &    & \multicolumn1c{$k=5$}      & \multicolumn1c{$k=10$}       &\multicolumn1c{$k=20$} \\
   \midrule
\multicolumn2{l}{$\cg M_1$}& 0.0348 &  0.0356 & 0.0342  & 0.0371 \\
 \midrule
 \multirow{10}{*}{$\cg M_2$}& $r=1$ & 0.1091 & 0.1106 &  0.1137 & 0.1161\\
 &$r=2$ &  0.1680 &   0.1623 & 0.1707 &   0.1648\\
 &$r=3$ & 0.1368 &   0.1331 &  0.1334 &  0.1400\\
 &$r=4$ & 0.1044 & 0.1083 &  0.1079 & 0.1047\\
 &$r=5$ & 0.0926 & 0.0893 &    0.0864 & 0.0841\\
 &$r=6$ &   0.0778 &  0.0840 &  0.0712 & 0.0738\\
 &$r=7$ & 0.0637 &  0.0740 & 0.0681 &  0.0675\\
 &$r=8$ & 0.0642 &   0.0593 &  0.0657 & 0.0673\\
 &$r=9$ &  0.0573 & 0.0580 &   0.0585 &   0.0594\\
 &$r=10$ &  0.0618 &  0.0555 &  0.0596 & 0.0551\\
 \midrule
\multicolumn2{l}{$\cg M_3$}  & 0.0294 &  0.0300 &  0.0306 &  0.0301\\
\bottomrule
\end{tabular}}
\caption{\em Estimated posterior model probabilities for the
Darwin's data} \label{tab:mod_prob}\vspace*{0.2cm}
\end{table}%

In order to compare the performance of the algorithms, we divided
the generated sample output, taken at fixed time interval (30
seconds), into 50 equal batches and we computed the batch standard
error \citep[see][for a similar comparison]{Dellap_For:02}. The
results are reported in Table~\ref{tab:stat_var} for the most
probable model, $\cg M_2$ with $r=2$. The same table also shows the
acceptance rates of the transdimensional moves and the computing
time, in seconds, required to run the different algorithms 
in {\sc Matlab} on an Intel Core 2 Duo processor of 2.0 GHz. 

\begin{table}[!ht]\centering
\vspace*{0.25cm} \small{
\begin{tabular}{llrrr}
\toprule
   && \multicolumn1c{\% accepted} & Standard error &  CPU time\\
  \midrule
 RJ  &  &  6.03 & 3.9639 &  48.59  \\
   \midrule
\multirow{3}{*}{GMTRJ-quad} &$k=5$  & 12.93 & 2.3055 &  78.54  \\
& $k=10$ & 17.02 &   2.0756  & 81.61\\
 & $k=20$ &  20.42  &  1.7538 &   81.78\\
\bottomrule
\end{tabular}}
\caption{\em Acceptance rate of the transdimensional move, batch
standard deviation of the highest posterior model probabilities
computed at fixed time interval (30 seconds), and computing time in
seconds of the corresponding algorithm for the Darwin's
data}\label{tab:stat_var} \vspace*{0.2cm}
\end{table}

Table~\ref{tab:stat_var}  shows that the acceptance rate of the
RJ algorithm is around 6\% whereas for the GMTRJ-quad algorithm this
rate varies in the range 12-20\%, depending on the trial set.
Moreover, we observe that the quadratic approximation of the target
distribution, with the same amount of computing time, may lead to an
improvement of the GMTRJ-quad performance with respect to the RJ
algorithm, lowering the batch standard error. In this example, and
with this choice of the prior hyperparameters, the optimal number of
trials is $k=20$.

\subsection{Logistic regression analysis}

The second experiment is based on logistic regression models for the number of
survivals in a sample of 79 subjects suffering from a certain
illness. The patient condition, $A$ (more or less severe), and the
received treatment, $B$ (antitoxin medication or not), are the
explanatory factors;  see \cite{Dellap_For:02} for details.

The aim of the example is 
to compare five possible logistic regression models:
\begin{itemize}
\item $\mathcal{M}_1$ (intercept); 
\item $\mathcal{M}_2$ (intercept + A);
\item $\mathcal{M}_3$ (intercept + B); 
\item $\mathcal{M}_4$ (intercept + A +
B); 
\item $\mathcal{M}_5$ (intercept + A + B + A.B). 
\end{itemize}
The last model, also
termed the full model, is formulated as
\begin{displaymath}
Y_{ij} \sim Bin(n_{ij},p_{ij}), \qquad \textrm{logit}(p_{ij}) = \mu
+ \mu_i^A + \mu_j^B + \mu_{ij}^{AB},
\end{displaymath}
where, for $i, j=1, 2$, $Y_{ij}$, $n_{ij}$ and $p_{ij}$ are 
the number of survivals, the total number of patients,
and the probability of survival for the patients with condition $i$
who received treatment $j$, respectively. Let $\b\mu = (\mu, \mu_2^A, \mu_2^B,
\mu_{22}^{AB})$ be the parameter vector of the full model. 
As in \cite{Dellap_For:02}, we used the prior $N(0, 8)$ for any of these
parameters, which by assumption are also {\em a priori} independent.

Here we aim to test the performance of the proposed model choice
approach by comparing the results of the RJ algorithm with those of
the GMTRJ-I, GMTRJ-inv, and GMTRJ-quad algorithms defined in Section \ref{quad_approx}.
We also implemented the DR algorithm of \cite{green:mira:01}, 
in which the second trial is attempted only conditionally on a rejection of the first proposal.
As mentioned in Section \ref{sec:1}, this approach shares some aspects with our
GMTRJ algorithm, allowing for a direct comparison in terms of efficiency. 

For all  the above algorithms, every sweep consists of a move aimed at updating the
parameters of the current model and of a transdimensional move aimed
at jumping from one model to another. In particular, we 
restricted the transdimensional moves to adjacent models, which
increase or decrease the model dimension by 1. Within each model,
updating of the parameters $\b\mu$ was performed via the MH
algorithm, drawing the new parameters value from a Normal
distribution, that is, $\b \mu_{t+1} \sim N(\b \mu_t,\sigma_p^2 I)$.
The same Normal distribution was 
also used as a proposal $T_{m,\ms}(\cdot,\cdot)$ for jumping from 
a model to another in the local transdimensional move, relying on 
suitable artificial spaces in order to impose the matching of the
parameters space dimensions. We chose $\si_p=0.5$, as the parameter of the proposal
distribution which allows us to reach
adequate acceptance rates and quite good performance of the
algorithms. In the GMTRJ-I, GMTRJ-inv, and GMTRJ-quad algorithms, the
multiple-try strategy was only applied in drawing the parameter
values, with three different numbers of trials, $k = 10,20,50$. 
In more detail, the transdimensional move consists of
selecting a new
candidate model and then
drawing $k$ parameters values under the
proposed model. Even in the DR algorithm, 
the secondary proposal was only referred to the parameter values of the model proposed in the first attempt. 
Moreover, we set the secondary proposal equal to the first one, in a
way similar to the multiple-try strategy. However, we acknowledge that 
different results may be obtained with different proposals, as for example 
by combining a ``bold'' first proposal with a conservative second proposal upon rejection. 
\citep{green:hast:12}.
All the Markov chains were initialized from the full
model with starting point $\b\mu = \mathbf{0}$, with $\mathbf{0}$
denoting a vector of zeros of suitable dimension. Finally each
Markov chain was run for 1,000,000 iterations discarding the first
200,000 as burn-in.

The output summaries are reported in Table \ref{summ} 
in terms of estimated posterior model probabilities for 
a number of trial proposals $k=10$;
as expected, all of the approaches gave similar results.

\begin{table}[h!]\centering\vspace*{0.25cm}
\small{
\begin{tabular}{lccccc}
\toprule
Model & RJ & DR & GMTRJ-I & GMTRJ-inv & GMTRJ-quad  \\
 \midrule
$\mathcal{M}_1 = \mu$                                     &	0.0048 &	0.0047 &	0.0050 &	0.0050 &	0.0050	\\
$\mathcal{M}_2 = \mu + \mu_i^A$                         &	0.4942	&	0.4923	&	0.4911	&	0.4907	&	0.4900	\\
$\mathcal{M}_3 = \mu + \mu_j^B$                            &	0.0108	&	0.0113	&	0.0113	&	0.0111	&	0.0112	\\
$\mathcal{M}_4 = \mu +  \mu_i^A + \mu_j^B$            &	0.4377	&	0.4408	&	0.4402	&	0.4408	&	0.4414	\\
$\mathcal{M}_5 = \mu +  \mu_i^A + \mu_j^B + \mu_{ij}^{AB}$ &	0.0525	&	0.0509	&	0.0524	&	0.0524	&	0.0524	\\
\bottomrule
\end{tabular}}
\caption{\em Estimated posterior model probabilities for the
logistic example. For the multiple-try strategy the size of the
trial set is chosen as $k=10$} \label{summ}\vspace*{0.2cm}
\end{table}%
Figure \ref{fig:post} and \ref{fig:post2} illustrate the evolution of the ergodic
probabilities for the models with the highest posterior
probabilities ($\cg M_2$ and $\cg M_4$)
in the first 300,000 iterations. 
\begin{figure}[h!]\centering
\vspace*{0.25cm}
\includegraphics[scale=0.7]{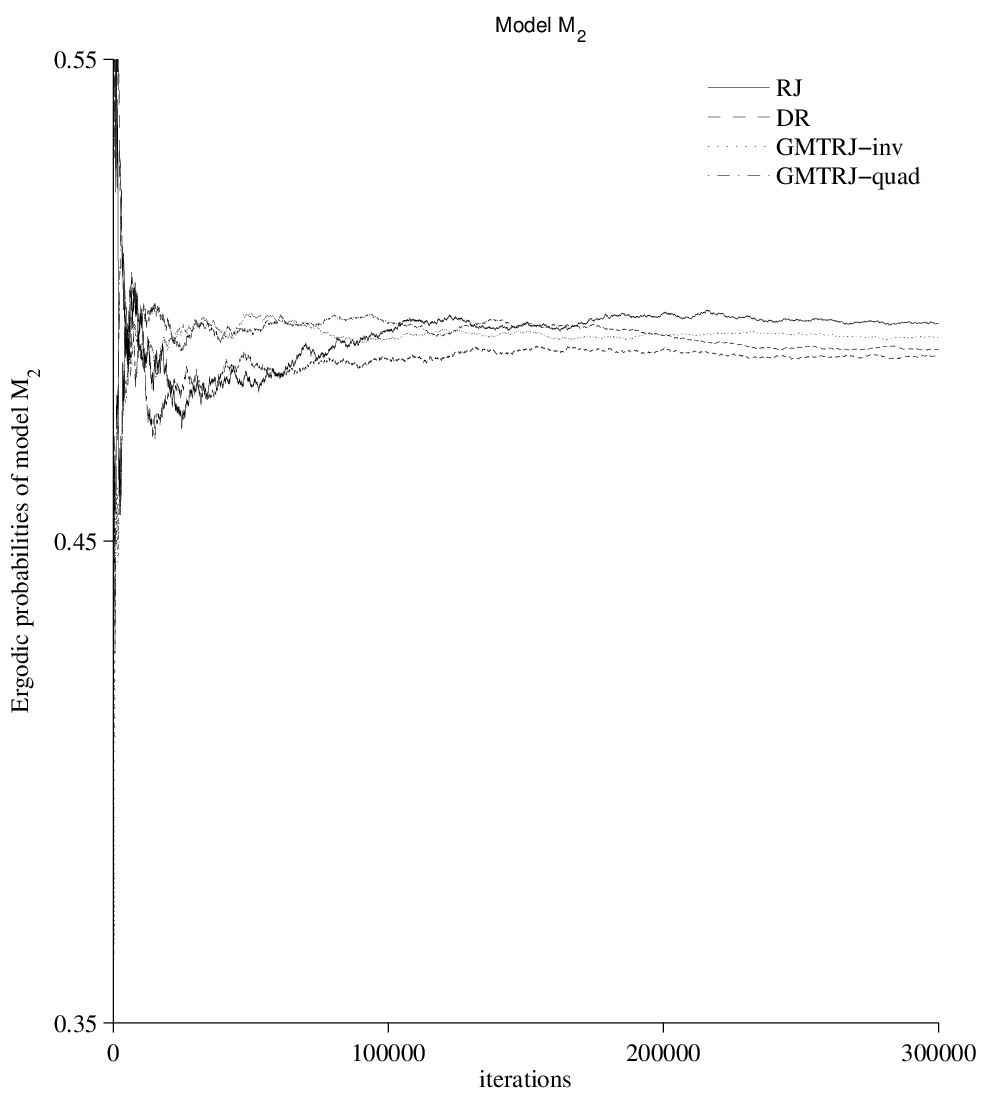}
\caption{\em Ergodic posterior model probability of model $\cg M_2$
under the logistic regression example. For the multiple-try strategy the size of the
trial set is chosen as $k=10$}
\label{fig:post}\vspace*{0.2cm}
\end{figure}%
\begin{figure}[h!]\centering
\vspace*{0.25cm}
\includegraphics[scale=0.7]{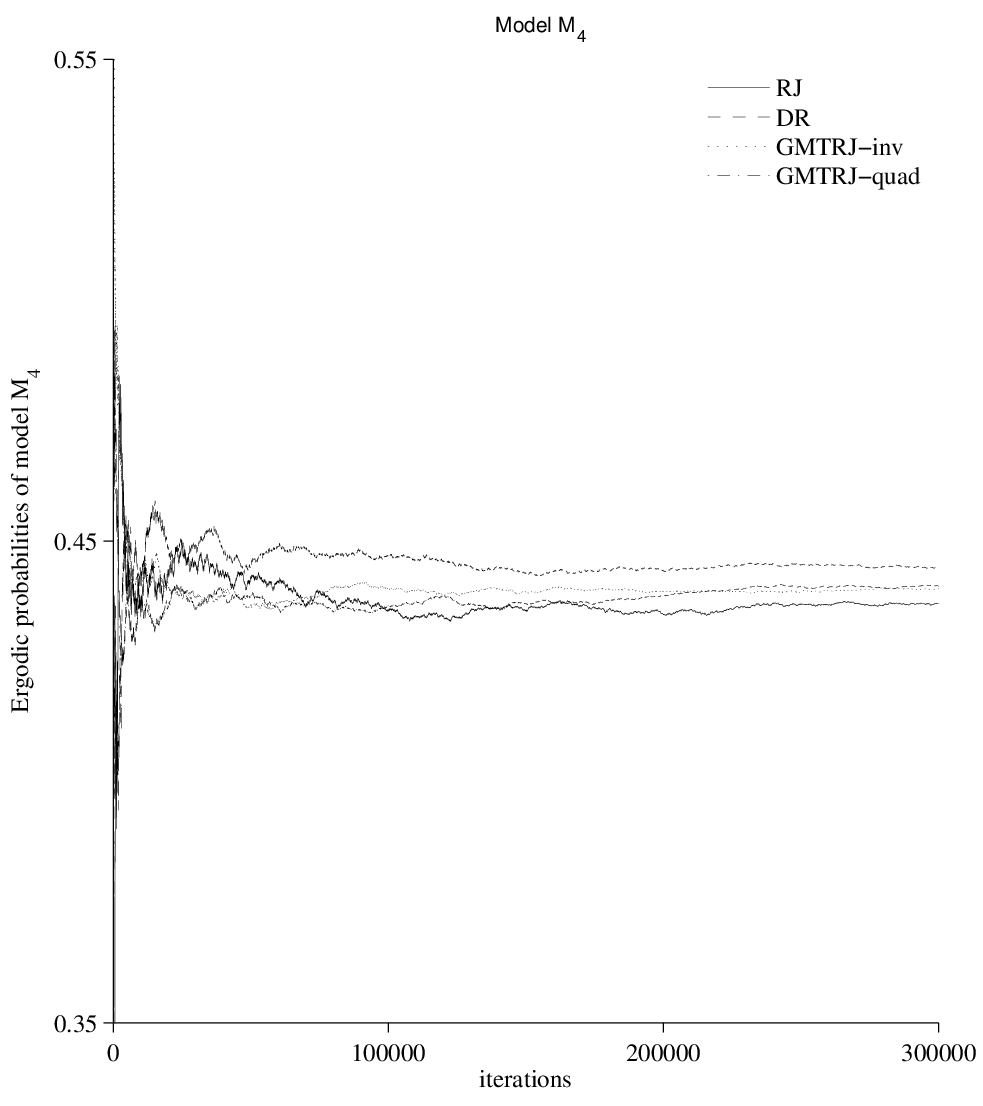}
\caption{\em Ergodic posterior model probability of model $\cg M_4$ under the logistic regression example. For the multiple-try strategy the size of the
trial set is chosen as $k=10$}
\label{fig:post2}\vspace*{0.2cm}
\end{figure}%

We observe that 200,000 is more than adequate as number of
iterations for the burn-in. Table \ref{acc} also shows the
acceptance rates of the transdimensional move for all the values of
$k$ considered and the corresponding computing time (in seconds)
registered at the end of all the iterations (we do not report the
results of the GMTRJ-I algorithm since they are quite similar to
those of the GMTRJ-inv algorithm). 
We also observe that the acceptance rate for the RJ algorithm is around 11\%,
whereas for the GMTRJ-inv and the GMTRJ-quad it is in the range 30-40\%,
depending on the size of the proposal set.  As expected, the DR algorithm shows an 
higher acceptance rate than the RJ (around 16\%).
We can also see that
the computing time required by the GMTRJ-quad algorithm is less
influenced by the number of trial proposals with respect to that
required by the GMTRJ-inv algorithm.

\begin{table}[!h]\centering
\vspace*{0.25cm} \small{
\begin{tabular}{llrr}
\toprule
 & &  \multicolumn1c{\% accepted}  &  \multicolumn1c{CPU time}\\
 \midrule
&RJ & 10.77	&	270.64	\\
\midrule
& DR &	16.28	&	316.19	\\
\midrule \multirow{2}{*}{$k=10$}  &  GMTRJ-inv  &	33.97	&	379.03	\\
 &  GMTRJ-quad  &	31.27	&	358.41	\\
 \midrule \multirow{2}{*}{$k=20$}  &  GMTRJ-inv  &	38.07	&	443.79	\\
 &  GMTRJ-quad  &	34.03	&	388.76	\\
\midrule \multirow{2}{*}{$k=50$}  &  GMTRJ-inv  &	40.79	&	644.16	\\
 &  GMTRJ-quad  &	35.64	&	472.32	\\
\bottomrule
\end{tabular}}
\caption{\em Acceptance rate and computing time in seconds of the
corresponding algorithm for the logistic example}\label{acc}
\vspace*{0.2cm}
\end{table}

We also compared the algorithms on the basis of the estimated {\em
integrated autocorrelation time} (IAT), that is proportional to the
sum of all-lag autocorrelations between the draws generated by the
algorithm of interest and takes into account the permanence in the
same model. In order to consider the computational costs, we
multiplied the IAT obtained from the output of the different
algorithms with the corresponding CPU times (on a Intel Core 2 Duo
processor). The results are reported in Table~\ref{Ratio4}.

\begin{table}[h!]\centering\vspace*{0.25cm}
\small{
\begin{tabular}{llrrrrr}
\toprule & & \multicolumn1c{$\cg M_1$} & \multicolumn1c{$\cg M_2$} & \multicolumn1c{$\cg M_3$} &
\multicolumn1c{$\cg M_4$} & \multicolumn1c{$\cg M_5$} \\
\midrule
   &  RJ &	7.689	&	15.405	&	14.765	&	12.476	&	8.322	\\
   \midrule
   & DR &	7.149	&	11.177	&	10.800	&	8.690	&	7.327	 \\
\midrule \multirow{3}{*}{$k=10$}&  GMTRJ-I   &	5.044	&	8.836	&	7.450	&	7.156	&	5.042	\\
  &  GMTRJ-inv &	5.021	&	7.840	&	7.324	&	6.142	&	4.798	\\
  &  GMTRJ-quad  &	4.771	&	7.970	&	6.223	&	6.607	&	4.618	\\
\midrule
\multirow{3}{*}{$k=20$}&  GMTRJ-I    &	5.667	&	8.970	&	7.535	&	7.189	&	5.993	\\
&  GMTRJ-inv   &	5.795	&	7.897	&	7.029	&	6.581	&	5.303	\\
  &   GMTRJ-quad  &	5.013	&	7.556	&	6.535	&	6.376	&	4.917	\\
\midrule
\multirow{3}{*}{$k=50$} & GMTRJ-I &	7.672	&	11.277	&	9.118	&	9.286	&	7.594	\\
& GMTRJ-inv &	8.230	&	10.549	&	9.883	&	9.028	&	7.399	\\
  &   GMTRJ-quad  &	6.013	&	8.685	&	7.499	&	7.364	&	5.902	\\
\bottomrule
\end{tabular}}
\caption{\em Values (adjusted for the computing time) of the
integrated autocorrelation time (IAT) for the logistic
example}\label{Ratio4}\vspace*{0.2cm}
\end{table}%

We observe that there is a consistent gain of efficiency of the GMTRJ algorithms with respect to the RJ and the DR algorithms. 
Overall, we can see that the proposed GMTRJ-quad algorithm, with $k=10$,
outperforms the other algorithms, when the computing time is
properly taken into account.

\subsection{Latent class analysis}\label{sec:lc}


This example is based on the same latent class model and the same data
considered by \cite{Good:74}, which concern the responses to four
dichotomous items of a sample of 216 subjects. 
These items were about the personal feeling toward four situations of role conflict.
Here, there are four binary response variables, collected in the
vector $\b Y = (Y_1,Y_2,Y_3,Y_4)$, assuming the value 1 if the
respondent tends towards universalistic values with respect to the
corresponding situation of role conflict 
and 0 if the respondent
tends towards particularistic values \citep[see][for a more detailed
description of the data]{Good:74}.

Parameters of the model are the class weights $\pi_c$, collected in the
vector $\b\pi$, and the conditional probabilities of ``success''
$\lambda_{j|c}$ (i.e.,  the probability that a subject in latent
class $c$ responds by 1 to item $j$, with $j=1,\ldots,4$), where
$c=1,\ldots,C$, with $C$ denoting the unknown number of 
classes. On the basis of these parameters, the probability of the  
response configuration $\b y=(y_1,y_2,y_3,y_4)$  
is given by
\begin{displaymath}
f(\b y)=\sum_{c=1}^C \pi_c\prod_{j=1}^4
\lambda_{j|c}^{y_j}(1-\lambda_{j|c})^{1-y_j}.
\end{displaymath}

The objective of a Bayesian analysis for the LC model described above 
is inference for the number of classes $C$ 
and the parameters $\pi_c$ and $\lambda_{j|c}$. 
A priori, we assumed a Dirichlet distribution for the
parameter vector $\b\pi\sim D(\delta,\ldots,\delta)$, and
independent Beta distributions for the parameters $\lambda_{j|c}
\sim Be(\gamma_1,\gamma_2)$. Finally, for $C$ we assumed   
a Uniform distribution between 1 and $C_{\max}$, where $C_{\max}$ is the
maximum number of classes. 

In order to estimate the posterior distribution of the number of classes and  
model parameters, we relied on the approach of \cite{rich_green:97}, who
applied the RJ algorithm to the analysis of finite mixtures of
normal densities with an unknown number of components. On the basis of
this approach, we adopted  an RJ strategy where the moves are
restricted to models with one more or one less component.

Moreover, as the estimation algorithm for the LC model is based on
the concept of {\em  complete data}, we also associated to each
subject in the sample an {\em allocation variable} (or latent
variable) $z_i$, denoting the subpopulation in which the
$i$-th individual belongs to. This variable is equal to $c$ when subject $i$ belongs to latent class $c$. 
The {\em a priori}
distribution of each $z_i$ depends on the class weights $\pi_c$; see
also \cite{Cap_rob_ry:03}. 
Under this formulation, the
{\em complete data likelihood} has logarithm
$$\ell^*(\b \theta) = \sum_c\sum_{\bl y}a_{ c \bl y} \log f(
c, \bfy),$$ where $\b\theta$ is the vector of all model parameters
arranged in a suitable way, $ a_{c \bl y}$  is the frequency of subjects with latent configuration $c$ and response
configuration $\b y$ and $f(c, \b y)$ is the manifest distribution
$$f(c, \bfy)=\pi_c \prod_j \lambda_{j|c}^{y_j}
(1-\lambda_{j|c})^{(1-y_j)}.$$

The implemented RJ algorithm is based on two different pairs of
dimension-changing moves, split-combine and birth-death,
each with probability 0.5, respectively. At every iteration,
split-combine or birth-death moves are preceded by a Gibbs move
which updates the parameters of the current model, sampling from
the full conditional distribution. In particular,  
the algorithm performs the following steps: 
\begin{enumerate}
\item {\em Gibbs move}: This move aims to update the model parameters
given the current number of classes without altering the dimension
of the parameters. This can be done through the Gibbs algorithm,
sampling from the full conditional distribution.  
In fact, we have that
\[
\b\pi |\cdots
\sim D(\delta+n_1,\ldots,\delta+n_C),
\] 
where $n_c = \# \{i : z_i = c\}$, $c=1,\ldots,C$, and where ``$\,|\cdots$\,'' denotes
conditioning on all other variables and parameters. The full
conditional for $\lambda_{j|c}$ are \[\lambda_{j|c}|\cdots \sim Be
\left(\gamma_1 + \sum_i y_{ij} \times I(z_i=c), \gamma_2+\sum_i(1-
y_{ij})\times I(z_i=c)\right),\] where $ y_{ij}$ denotes the
observed response of subject $i$ to item $j$, with $i=1,\ldots,n$,
$j=1,\ldots,4$, and $I(\cdot)$ denotes the indicator function.
Finally, for the allocation variable we have
\[p(z_i=c\,|\ldots)\propto \pi_c \prod_j
\lambda_{j|c}^{y_{ij}}(1-\lambda_{j|c})^{1-y_{ij}}.\]
\item {\em Split-combine move}: This move aims to split a class into two or combine two classes into one.
Suppose that the current state of the chain is $(C,\b \theta_C)$; we
first make a random choice between attempting to split or combine
with probability 0.5. Obviously, if $C=1$ we always propose a split
move whereas if $C=C_{\max}$ we always propose a combine move. The
split proposal consists of choosing a class $c^*$ at random and
splitting it into two new ones, labeled $c_1$ and $c_2$. The
corresponding parameters are split as follows:
\begin{enumerate}
\item $\pi_{c_1} = \pi_{c^*} \times u$ and $\pi_{c_2} = \pi_{c^*} \times (1-u)$ with $u
\sim Be(\alpha,\beta)$;
\item $\lambda_{j|c_1}\sim Be(\tau \times
\lambda_{j|c^*},\tau \times (1- \lambda_{j|c^*}))$ and
$\lambda_{j|c_2}\sim Be(\tau \times \lambda_{j|c^*},\tau \times
(1-\lambda_{j|c^*}))$, for $j=1,\ldots,4$, where $\tau$ is a 
constant that has to be tuned in order to reach an adequate
acceptance rate.
\end{enumerate}
When the split move is accomplished, it remains only to propose the
reallocation of those observations with $z_i = c^*$ between $c_1$
and $c_2$. The allocation is done on the basis of probabilities
computed analogously to the Gibbs allocation.

In the reverse combine move, a pair of classes $(c_1,c_2)$ is picked
at random and merged into a new one, $c^*$, as follows:
\begin{enumerate}
\item $\pi_{c^*}  =  \pi_{c_1}+\pi_{c_2}$;
\item $\lambda_{j|c^*}\sim Be(\tau \times \bar\lambda,\tau \times (1-
\bar\lambda))$, with $\bar\lambda =(\lambda_{j|c_1} +
\lambda_{j|c_2})/2$
 for $j=1,\ldots,4$.
\end{enumerate}
The reallocation of the observations with $z_i = c_1$ or $z_i = c_2$
is done by setting $z_i = c^*$.

The split move is accepted with probability $\min\{1,A\}$ whereas
the combine move is accepted with probability $\min\{1,A^{-1}\}$,
where $A$, after some calculation illustrated in 
\ref{sec:B}, can be computed as 

\begin{eqnarray}\label{eq:A_split}
A&=&\frac{L^*(\bfy|C+1,\b\theta_{C+1})p(\b\theta_{C+1}|C+1)}{L^*(\bfy|C,\b\theta_C)p(\b\theta_C|C)}\times\frac{P_c(C+1)}{P_s(C)P_{alloc}}\nonumber\\ 
&\times&\frac{\prod_j b_{\tau\times\bar\lambda,
\tau\times(1-\bar\lambda)}(\lambda_{j|c^*})}{b_{\alpha,\beta}(u)\,
\prod_j
 b_{\tau\times\lambda_{c^*},
\tau\times(1-\lambda_{c^*})}(\lambda_{j|c_1})\;
b_{\tau\times\lambda_{c^*},
\tau\times(1-\lambda_{c^*})}(\lambda_{j|c_2})}\times|\b J_{split}|.
\end{eqnarray}
In the above expression, $L^*(\bfy|m,\b\theta_m)$ is the exponential value of the complete data log-likelihood  $\ell^*(\bfy|m,\b\theta_m)$.
Moreover, $P_s(C)$ is the probability of
splitting a component when the the current number of classes is $C$, whereas  
$P_c(C+1)$ is the probability of combining two components when the current 
number of classes is $C+1$. $P_{alloc}$
is the probability that this particular allocation is made, $b_{p,q}(\cdot)$ 
denotes the $Be(p,q)$ density and $|\b J_{split}|$ is the
Jacobian of the transformation from $(C,\b\theta_C)$ to
$(C+1,\b\theta_{C+1})$, which is equal to $\pi_{c^*}$.
\item {\em Birth-death move}:  
This move aims to add a new empty class or delete an
existing one. In particular, we first propose a birth or a death
move along the same lines as above; then, a birth is accomplished by
generating a new empty class, that is, a class to which no
observation is allocated, denoted by $c^*$. To do this we draw
$\pi_{c^*}$ from a $Be(1,C)$ distribution, 
where $C$ is the current number of
classes, and rescale the existing weights, so that they sum to 1, as
$\pi^{\prime}_c = \pi_c(1-\pi_{c^*})$, for $c=1,\ldots,C+1$ with
$c\neq c^*$. The new parameters $\lambda_{j|c^*}$ are drawn, for
$j=1,\ldots,4$, from their prior distribution.

For the death move, a random choice is made between the empty
classes; the chosen class is deleted and the remaining class weights
are rescaled to sum to 1. The allocation of the $z_i$ is unaltered
because the class deleted is empty.

The use of the prior distribution in proposing the new values for
$\lambda_{j|c^*}$ leads to a simplification of the resulting
acceptance probability, $\min\{1,A\}$; after some calculation, $A$
reduces to
$$
A
=\frac{\pi_{c^*}^{\delta-1}(1-\pi_{c^*})^{n+C\delta-C}}{B(C\delta,\delta)}\times\frac{P_d(C+1)}{P_b(C)}\times
\frac{(C+1)}{(C_0+1)}\times\frac{1}{g_{1,C}(\pi_{c^*})}\times
|\b J_{birth}|.
$$
Here, the first term is the prior ratio, whereas the likelihood
ratio is 1. The remaining terms contain the proposal ratio; in
particular, $B(\cdot,\cdot)$ is the Beta function, $C_0$ is the
number of empty classes and $P_b$ and $P_d$ are the probability of 
having a birth and a death, respectively. The Jacobian is computed as $|\b J_{birth}| =
(1-\pi_{c^*})^{C-1}$. The death move is accepted with probability
$\min\{1,A^{-1}\}$.
\end{enumerate}

A well-known problem that arises in Bayesian analysis of mixture models
is the so-called label switching problem, that is, the non-identifiability of the component 
due to the
invariance of the posterior distribution to the permutations in the
parameters labeling. Several solutions have been proposed in the
literature, for a review see \cite{jasra:et:al:05}. 
For our illustrative example it is possible to focus
solely on the inference about the number of unknown classes, that is
invariant to label switching, using relabeling techniques
retrospectively by post-processing the RJ output.

We compared the standard RJ algorithm, based on the three steps above, with 
the proposed GMTRJ algorithm. In
particular, this is a situation in which the quadratic approximation
of the target distribution cannot
be easily computed. In this case,
the GMTRJ may again be applied, based on computing the selection
probabilities of the proposed trials as a quantity proportional to
the {\em incomplete likelihood}, corresponding to the manifest
distribution of the observable data. The incomplete likelihood does
not include the allocation variables $z_i$, which have not to be
reallocated for each proposed trial. This allows us to easily
compute the weighting function, saving much computing time and resulting
in an efficient proposal. We refer to  
this version as the
GMTRJ-man algorithm. We also implemented the GMTRJ-inv algorithm based
on the weighting function defined in Section \ref{quad_approx}.
In general, the GMTRJ scheme consists of choosing at
random a single class to split (in the split move) or to add (in the
birth move) and the multiple-try strategy is only applied in drawing
the parameter values, under the proposed model. The reverse combine
and death moves may be easily derived.  
The comparison also involves the DR algorithm,  
in a formulation that 
closely resembles that proposed in \cite{bart:et:al:2003}.  
Even in this case, the secondary proposal only consists in drawing
the parameter values, under the proposed model.

In order to compare the
different algorithms, we ran each Markov chain for 2,000,000 sweeps
following a burn-in of 400,000 iterations; moreover, for the
parameters of the prior distributions we set $\delta = 1$,
$\gamma_1=\gamma_2=1$ and $C_{max} = 20$. For the split-combine move
we also chose $\alpha = \beta = 2$ and $\tau = 10$. Finally, for the
GMTRJ-inv and GMTRJ-man algorithms, we considered two different sizes
of the proposal set $k=5,10$.

Table \ref{tab:freqv} shows the estimated posterior probabilities of
each class for all the algorithms, using $k=5$. We observe that all
the algorithms give quite similar posterior probabilities of the
number of classes; the two most probable models are those with two
and three classes. Table \ref{tab:acc_rate} illustrates the
proportion of moves accepted, together with the computing time (in
seconds) required to run the corresponding algorithm. The plot of
the first 20,000 values of $C$ after the burn-in is given in Figure
\ref{fig:plotC}. In order to check for stationarity, Figure
\ref{fig:stat} also shows 
the plot of the cumulative occupancy
fractions for different values of $C$, against the number of sweeps,
for the first 1,000,000 iterations. In both figures we considered a
fixed number of trials, $k=5$. Finally, Table \ref{tab:aut} shows the
values of the IAT, so as to measure the autocorrelation of the
Markov chain with states corresponding to the models with a number
of classes between 2 and 7. As in the previous examples, we report
the values corrected for the computing time.
\begin{table}[ht!]\centering\vspace*{0.25cm}
\small{
\begin{tabular}{clcccc}
\toprule
\multicolumn{2}{c}{$C$} &  RJ  &  DR & GMTRJ-inv & GMTRJ-man\\
\midrule \multicolumn{2}{c}{1}         & 0.000	&	0.000	&	0.000	&	0.000 \\
\multicolumn{2}{c}{2}         &  0.214	&	0.211	&	0.211	&	0.212 \\
\multicolumn{2}{c}{3}         &  0.219	&	0.217	&	0.218	&	0.213 \\
\multicolumn{2}{c}{4}         &  0.172	&	0.174	&	0.180	&	0.177 \\
\multicolumn{2}{c}{5}         &  0.130	&	0.131	&	0.132	&	0.131 \\
\multicolumn{2}{c}{6}         & 0.093	&	0.092	&	0.090	&	0.092\\
\multicolumn{2}{c}{7}         &0.065	&	0.063	&	0.061	&	0.063 \\
\multicolumn{2}{c}{8}         &  0.042	&	0.042	&	0.040	&	0.041 \\
\multicolumn{2}{c}{9}         & 0.025	&	0.027	&	0.026	&	0.027 \\
\multicolumn{2}{c}{10}        & 0.016	&	0.016	&	0.017	&	0.017 \\
\multicolumn{2}{c}{$C \geq 11$} & 0.024	&	0.027	&	0.027	&	0.027\\
\bottomrule
\end{tabular}}
\caption{\em Estimated posterior distribution of the number of
classes $C$ for the latent class example. For the multiple-try
strategy the size of the trial set is chosen as $k = 5$}
\label{tab:freqv}\vspace*{0.2cm}
\end{table}%
\begin{table}[h!]\centering\vspace*{0.25cm}
\small{
\begin{tabular}{llrrrrr}
\toprule & & \multicolumn4{c}{\% accepted} & \\
 & & split & combine & birth & death & CPU time \\
\midrule
 & RJ & 2.00	&	1.99	&	5.33	&	5.34	&	2,972.90\\
 \midrule
 & DR & 3.23	&	3.23	&	7.74	&	7.74	&	4,142.00 \\
 \midrule
\multirow2{*}{$k=5$} & GMTRJ-inv &5.22	&	5.18	&	10.40	&	10.42	&	6,263.10 \\
 & GMTRJ-man & 3.40	&	3.37	&	8.45	&	8.45	&	3,798.20\\
 \midrule
\multirow2{*}{$k=10$} & GMTRJ-inv & 7.28 	&	7.28 	&	11.26 	&	11.31 	&	9,851.50 \\
& GMTRJ-man & 4.03	&	4.03	&	8.64	&	8.61	&	4,242.60\\
\bottomrule
\end{tabular}}
\caption{\em Acceptance rate for the latent class example and
computing time (in seconds) of the corresponding algorithms}
\label{tab:acc_rate}\vspace*{0.2cm}
\end{table}%
\begin{figure}[!h]\centering\vspace*{0.25cm}
\includegraphics[width=14cm, height = 14cm]{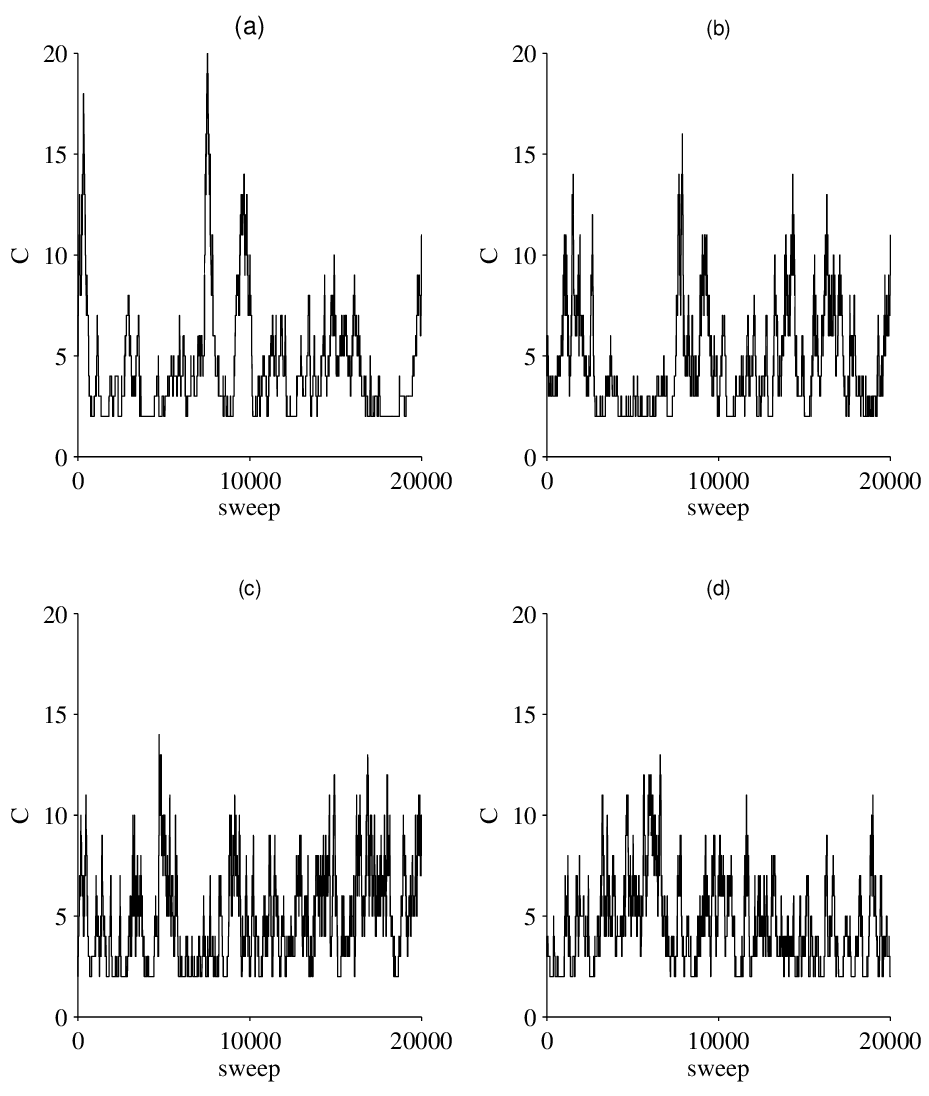}
\caption{\em Number of latent classes in the first 20,000 iteration
after the burn-in: (a) RJ, (b) DR, (c) GMTRJ-inv with $k=5$, (d) GMTRJ-man
with $k=5$}\label{fig:plotC}\vspace*{0.2cm}
\end{figure}
\begin{figure}[!h]\centering\vspace*{0.25cm}
\includegraphics[width=14cm, height = 14cm]{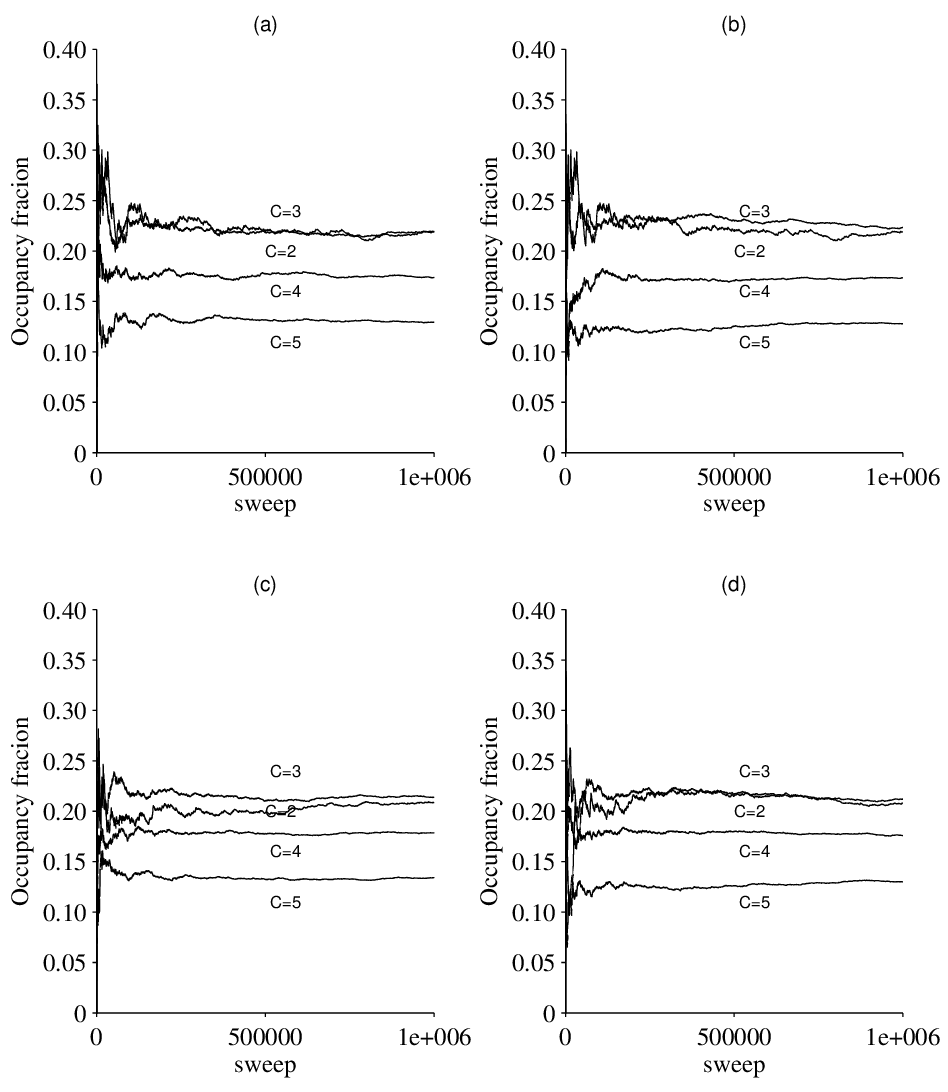}
\caption{\em Occupancy fraction for different values of C: (a) RJ, (b) DR,
(c) GMTRJ-inv with $k=5$, (d) GMTRJ-man with
$k=5$}\label{fig:stat}\vspace*{0.2cm}
\end{figure}
\begin{table}[!h]\centering\vspace*{0.25cm}
\small{
\begin{tabular}{l|l|rrrrrr}
\toprule & & \multicolumn{6}{c}{Number of classes $C$} \\
   & &    \multicolumn1c{2} & \multicolumn1c{3} & \multicolumn1c{4} & \multicolumn1c{5} & \multicolumn1c{6} & \multicolumn1c{7}  \\
   \midrule
   &  RJ                                & 473.25	&	188.32	&	117.24	&	105.38	&	115.04	&	120.42  \\
   \midrule
   & DR & 503.10	&	182.99	&	105.98	&	102.28	&	111.10	&	118.99\\
   \midrule
\multirow{2}{*}{$k=5$} &  GMTRJ-inv  & 528.91	&	205.84	&	121.50	&	113.14	&	125.99	&	129.94 \\
                        &  GMTRJ-man     & 271.19	&	124.97	&	79.83	&	68.20	&	88.09	&	87.51 \\
  \midrule
\multirow{2}{*}{$k=10$}  &  GMTRJ-inv &   560.58	&	223.03	&	134.38	&	145.14	&	146.30	&	147.75 \\
    &  GMTRJ-man  &        302.93	&	139.59	&	89.17	&	76.18	&	98.40	&	97.75  \\
\bottomrule
\end{tabular}}
\caption{\em Values (adjusted for the computing time) of the
integrated autocorrelation time (IAT) for the latent class
example}\label{tab:aut} \vspace*{0.2cm}
\end{table}

On the basis of the above results,  
we conclude that both the DR algorithm  
and the GMTRJ-inv and
GMTRJ-man algorithms have higher acceptance rates than
the RJ algorithm. Moreover, all the algorithms mix well over $C$,
with few excursions in very high values of $C$ and a quite good
mixing (Figure \ref{fig:plotC}). From Figure \ref{fig:stat}, we 
observe that for all the algorithms the burn-in is more than adequate to
achieve stability of the occupancy fraction. Finally, it is
worth noting that, when the size of the trial set increases, the
GMTRJ-inv may results in lower efficiency, due to the computational
time required. The same can be said for the DR algorithm, whose 
efficiency, in terms of autocorrelation, is almost equivalent to that of the RJ algorithm, 
when the computing time is considered. 
On the other hand, the use of the incomplete
likelihood in the computation of the selection probabilities allows
the GMTRJ-man to reach quite good performance, even with increasing
number of proposal trials. Overall, when the computing time is taken
into account, the GMTRJ-man algorithm
with $k=5$ outperforms the other algorithms in terms of autocorrelation of the chain.
\section{Discussion}\label{sec:5}

We presented an extension of the RJ algorithm, called
GMTRJ algorithm, which allows us to explore the different models in
a more efficient way. The idea is to exploit the multiple-try paradigm in
order to propose a fixed number of transdimensional moves, and then
select one of them on the basis of suitable selection probabilities.
These probabilities are computed on the basis of a weighting
function that can be arbitrary chosen. We illustrated several
special cases of the algorithm resulting from this choice. Some of
these algorithms may be seen as the corresponding versions, in
Bayesian model choice context, of the MTM algorithm introduced by
\cite{llw:00} for Bayesian estimation problems; we termed these
algorithms GMTRJ-I and GMTRJ-inv. We also introduced alternative
versions of the GMTRJ algorithm, that could be useful in different
model selection problems. The first version replaces, in the
weighting function, the target distribution with its quadratic
approximation, so that the resulting algorithm, that we termed 
GMTRJ-quad, is more efficient than the standard RJ algorithm,
without being much more computationally intensive. We also
demonstrated that, when for some variable selection problems the
computation of the quadratic approximation is not feasible, it is
still possible to derive useful weighting functions that lead to an
efficient algorithm.

We illustrated the potential of this approach by a simple example,
referred to as the well-known Darwin's data, and by two more realistic
examples. The first concerned the selection of covariates in a
logistic regression model. In this example, we compared the
performance of the RJ algorithm with the performance of the GMTRJ-I,
GMTRJ-inv, and GMTRJ-quad algorithms. We also implemented the DR algorithm
introduced by \cite{green:mira:01}, which has some analogies to the proposed methods.
We showed that, in the considered examples, 
the GMTRJ algorithm outperforms 
both the RJ and the DR algorithm, with lower stationary
autocorrelation of the Markov chain. Moreover, the quadratic
approximation allows us to obtain a gain of efficiency with respect
to the other algorithms, when the computing time is properly taken
into account. The last example involved the estimation of the number
of latent classes in an 
LC model. This is an example in which the
computation of the quadratic approximation of the target
distribution is not easy to derive. We therefore proposed to use the
incomplete likelihood as weighting function; this choice allows us
to save much computing time without loss of efficiency. The
resulting version of the GMTRJ algorithm was named GMTRJ-man. The
results obtained from applying this proposed approach to the LC
example yielded good performance.  

Further research is necessary to explore different types of weighting
function and to better evaluate how this choice affects the
efficiency of the resulting algorithm. Moreover, it may be of interest to consider how 
different extensions of the MTM approach for fixed models proposed in the literature
can be applied in the GMTRJ setting. 
In particular, interesting extensions of the MTM approach are related to different proposal trials   
\citep{casarin:et:al:13} or correlated candidates \citep{qin:lui:01,craiu:lem:07,mart:et:al:12}, 
which can be selected on the basis of a generic weighting function.  

\section*{Acknowledgements}
The authors thank the reviewers for constructive comments on the manuscript. 
Nial Friel's research was supported by Science Foundation Ireland under grant 07/CE/I1147. 
Francesco Bartolucci acknowledges the financial support from the grant FIRB (``Futuro in
ricerca'') 2012 on ``Mixture and latent variable models for causal inference and
analysis of socio-economic data'' which is funded by the Italian Government
(RBFR12SHVV).

\appendix
\section{Proof of the detailed balance condition}\label{sec:A}

As is common in the MCMC approach, the generated chain has to be
reversible and to satisfied the {\em detailed balance condition}
\citep{Green:95}. This condition defines a situation of equilibrium
in the Markov chain, namely that the probability of being in
$\b\theta$ and moving to $\bts$ is the same as the probability of
being in $\bts$ and moving back to $\b\theta$ \citep[see][for more
details]{rob_cas:04}.

In the following theorem we demonstrate that the detailed balance
condition holds in the generalized MTM version of the RJ algorithm.

\begin{theorem}\label{teo:GMTRJ}
 The GMTRJ algorithm satisfies detailed balance.
\end{theorem}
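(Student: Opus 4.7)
The plan is to expand the transition kernel $P(\b\theta_i,\b\theta_{a_j})$ as an iterated integral over all the auxiliary randomness of the algorithm---the choice of index sets $\cg A$ and $\cg B$, the unselected trials $\{\b\theta_{a_h}\}_{h\neq j}$, the reference draws $\b\theta_{b_1},\dots,\b\theta_{b_{k-1}}$, and the selection event that picks $\b\theta_{a_j}$---and then to verify that the resulting integrand, multiplied by $\pi(\b\theta_i)$, is pointwise symmetric under the swap of roles between the forward move $\b\theta_i\to\b\theta_{a_j}$ and its reverse. The transdimensional aspects are absorbed through the bijection $g_{i,a_j}$ of the standard RJ construction: both kernels are written on a common augmented space, with a single Jacobian factor $|J(\b\theta_i,\b\theta_{a_j})|$ arising in the acceptance ratio and flipping to its inverse under $g_{a_j,i}=g_{i,a_j}^{-1}$.

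The heart of the argument is the classical Metropolis--Hastings min-trick. Writing $\alpha=\min\{1,R\}$ with
\[
R=\frac{\pi(\b\theta_{a_j})T(\b\theta_{a_j},\b\theta_i)\,p_{\bl\theta_i}}{\pi(\b\theta_i)T(\b\theta_i,\b\theta_{a_j})\,p_{\bl\theta_{a_j}}}\,|J(\b\theta_i,\b\theta_{a_j})|,
\]
the product $\pi(\b\theta_i)T(\b\theta_i,\b\theta_{a_j})\,p_{\bl\theta_{a_j}}\,\alpha$ collapses to
\[
\min\!\bigl\{\pi(\b\theta_i)T(\b\theta_i,\b\theta_{a_j})\,p_{\bl\theta_{a_j}},\ \pi(\b\theta_{a_j})T(\b\theta_{a_j},\b\theta_i)\,p_{\bl\theta_i}\,|J(\b\theta_i,\b\theta_{a_j})|\bigr\},
\]
an expression manifestly invariant under the exchange $(\b\theta_i,\b\theta_{a_j})\leftrightarrow(\b\theta_{a_j},\b\theta_i)$ once the Jacobian's inversion under $g_{a_j,i}$ is taken into account. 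This is the analogue, fibrewise in the auxiliary variables, of the GMTM detailed-balance identity already established in \cite{pand:et:al:10}, with the Jacobian supplying the extra factor needed for the varying-dimension case.

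The essential bookkeeping step pairs the auxiliary randomness of the forward move with that of the reverse move: the forward trials $\{\b\theta_{a_1},\dots,\b\theta_{a_k}\}$ (with $\b\theta_{a_j}$ selected) and references $\{\b\theta_{b_1},\dots,\b\theta_{b_{k-1}},\b\theta_{b_k}\!=\!\b\theta_i\}$ correspond exactly to the reverse-move trials $\{\b\theta_{b_1},\dots,\b\theta_{b_{k-1}},\b\theta_i\}$ (with $\b\theta_i$ selected) and references $\{\b\theta_{a_1},\dots,\b\theta_{a_{j-1}},\b\theta_{a_{j+1}},\dots,\b\theta_{a_k},\b\theta_{a_j}\}$. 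Under this bijection, the products $\prod_h T(\b\theta_i,\b\theta_{a_h})$ and $\prod_{h<k}T(\b\theta_{a_j},\b\theta_{b_h})$ appearing in the forward kernel reappear unchanged as the analogous factors of the reverse kernel, and the selection probabilities $p_{\bl\theta_{a_j}},\,p_{\bl\theta_i}$ slot into exactly the positions needed for the symmetrization above. The main obstacle I expect is precisely this matching for the subset-selection distributions of $\cg A$ and $\cg B$: one must ensure that the joint probability of drawing the particular subsets in the forward move equals the probability of drawing the mirrored subsets in the reverse move, which requires imposing on the subset-selection mechanism an RJ-style symmetry analogous to $j(m,m')=j(m',m)$. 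Granted this structural hypothesis, detailed balance follows by integrating the symmetric min expression against the (now swap-invariant) auxiliary-variable measure.
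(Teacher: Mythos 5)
Your proposal follows essentially the same route as the paper's proof: expand the transition kernel as an integral over the auxiliary trials and reference draws, collapse $\pi(\b\theta_i)T(\b\theta_i,\b\theta_{a_j})\,p_{\bl\theta_{a_j}}\,\alpha$ via the Metropolis min-trick into an expression symmetric under the swap of forward and reverse roles, pair the forward auxiliary variables with those of the reverse move (the paper does this via exchangeability of the trials, picking up a factor of $k$), and use $|J(\b\theta_{a_j},\b\theta_i)|=1/|J(\b\theta_i,\b\theta_{a_j})|$. The only place you go beyond the paper is in flagging that the subset-selection mechanism for $\cg A$ and $\cg B$ must satisfy a symmetry condition analogous to $j(m,m')=j(m',m)$; the paper's proof silently omits these selection probabilities from the integrand, so your explicit hypothesis is, if anything, a more careful rendering of the same argument.
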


The GMTRJ algorithm involves transitions to states of variable
dimension, and consequently the detailed balance condition is now
written as
\[ \pi(m,\b\theta_m) P_{m,\ms}(\b\theta_m,\bts^{(j)}_{\ms}) =
\pi(\ms,\bts^{(j)}_{\ms}) P_{\ms,m}(\bts^{(j)}_{\ms},\b\theta_m)
|\b J(\b\theta_m,\b u^{(j)}_{\tilde{m}})|. \] where, as above, $\b
\theta_m$ represents the current value of the parameter vector and
$\bts^{(j)}_{\tilde{m}}$ is one of the new parameters proposed for
$j = 1,\ldots,k$.

Suppose that $\b\theta_m \neq \bts^{(j)}_{\tilde{m}}$, noting that
$\bts^{(1)}_{\tilde{m}},\dots,\bts^{(k)}_{\tilde{m}}$ are
exchangeable, it holds that
$$\begin{array}{ll}\vspace{2mm}
& \pi(m,\b\theta_m) P_{m,\ms}(\b\theta_m,\bts^{(k)}_{\tilde{m}}) =
\\ \vspace{2mm}
& = k\;\pi(m,\b\theta_m)\;h(m,\ms)\; T_{m,\tilde{m}}(\bth_m,\b
u^{(k)}_{\tilde{m}})
\;p_{m,\ms}(\bth_m,\bts^{(k)}_{\tilde{m}})
\\ \vspace{2mm}
&\times\displaystyle\int\dots\int T_{m,\tilde{m}}(\bth_m,\b
u^{(1)}_{\tilde{m}})\dots T_{m,\tilde{m}}(\bth_m,\b
u^{(k-1)}_{\tilde{m}})
\\ \vspace{3mm}
&\times \min\left\{\displaystyle
1,\frac{\pi(m,\bts^{(k)}_{\tilde{m}})\;
h(\ms,m)\;T_{\tilde{m},m}(\bts^{(k)}_{\tilde{m}},\b
u_m)\;p_{\ms,m}(\bts^{(k)}_{\tilde{m}},\bth_m) }
{\pi(m,\b\theta_m)\;h(m,\ms)\;T_{m,\tilde{m}}(\bth_m,\b
u^{(k)}_{\tilde{m}}) \;p_{m,\ms}(\bth_m,\bts^{(k)}_{\tilde{m}})
 }\;|\b J(\b\theta_m,\b u^{(k)}_{\tilde{m}})|
\right\}
\\ \vspace{3mm}
& \times T_{\ms,m}(\bts^{(k)}_{\tilde{m}},\bar{\b u}^{(1)}_{m})\dots
T_{\ms,m}(\bts^{(k)}_{\tilde{m}},\bar{\b u}^{(k-1)}_{m})\;
d\bts^{(1)}_{\tilde{m}}\dots d\bts^{(k-1)}_{\tilde{m}}\;
d\btd^{(1)}_{m}\dots d\btd^{(k-1)}_{m} =
\\ \vspace{2mm}
& = k\;\displaystyle\int\dots\int T_{m,\tilde{m}}(\bth_m,\b
u^{(1)}_{\tilde{m}})\dots T_{m,\tilde{m}}(\bth_m,\b
u^{(k-1)}_{\tilde{m}})
\\ \vspace{2mm}
& \times \min
\Big\{\pi(m,\b\theta_m)\;h(m,\ms)\;T_{m,\tilde{m}}(\bth_m,\b
u^{(k)}_{\tilde{m}}) \;p_{m,\ms}(\bth_m,\bts^{(k)}_{\tilde{m}}),
\\ \vspace{2mm}
&\;\;\;\;\;\;\;\;\;\;\;\;\pi(m,\bts^{(k)}_{\tilde{m}})\;
h(\ms,m)\;T_{\tilde{m},m}(\bts^{(k)}_{\tilde{m}},\b
u_m)\;p_{\ms,m}(\bts^{(k)}_{\tilde{m}},\bth_m)\;|\b J(\b\theta_m,\b
u^{(k)}_{\tilde{m}})|\Big\} 
\\ \vspace{3mm}
& \times T_{\ms,m}(\bts^{(k)}_{\tilde{m}},\bar{\b u}^{(1)}_{m})\dots
T_{\ms,m}(\bts^{(k)}_{\tilde{m}},\bar{\b u}^{(k-1)}_{m})\;
d\bts^{(1)}_{\tilde{m}}\dots d\bts^{(k-1)}_{\tilde{m}}\;
d\btd^{(1)}_{m}\dots d\btd^{(k-1)}_{m} =
\\ \vspace{2mm}
& = k\;\pi(m,\bts^{(k)}_{\tilde{m}})\;
h(\ms,m)\;T_{\tilde{m},m}(\bts^{(k)}_{\tilde{m}},\b
u_m)\;p_{\ms,m}(\bts^{(k)}_{\tilde{m}},\bth_m)\;|\b J(\b\theta_m,\b
u^{(k)}_{\tilde{m}})|
\\ \vspace{2mm}
 &\times \displaystyle \int\dots\int T_{m,\tilde{m}}(\bth_m,\b
u^{(1)}_{\tilde{m}})\dots T_{m,\tilde{m}}(\bth_m,\b
u^{(k-1)}_{\tilde{m}})\times
\\ \vspace{3mm}
&\times \min\left\{
1,\frac{\displaystyle\pi(m,\b\theta_m)\;h(m,\ms)\;T_{m,\tilde{m}}(\bth_m,\b
u^{(k)}_{\tilde{m}}) \;p_{m,\ms}(\bth_m,\bts^{(k)}_{\tilde{m}})}
{\displaystyle\pi(m,\bts^{(k)}_{\tilde{m}})\;
h(\ms,m)\;T_{\tilde{m},m}(\bts^{(k)}_{\tilde{m}},\b
u_m)\;p_{\ms,m}(\bts^{(k)}_{\tilde{m}},\bth_m)}
\frac{1}{\displaystyle|\b J(\b\theta_m,\b u^{(k)}_{\tilde{m}})|}
\right\}
\\ \vspace{3mm}
&\times T_{\ms,m}(\bts^{(k)}_{\tilde{m}},\bar{\b u}^{(1)}_{m})\dots
T_{\ms,m}(\bts^{(k)}_{\tilde{m}},\bar{\b u}^{(k-1)}_{m})\;
d\bts^{(1)}_{\tilde{m}}\dots d\bts^{(k-1)}_{\tilde{m}}\;
d\btd^{(1)}_{m}\dots d\btd^{(k-1)}_{m} =
\\ \vspace{2mm}
& = \pi(\ms,\bts^{(k)}_{\ms}) P_{\ms,m}(\bts^{(k)}_{\ms},\b\theta_m)
|\b J(\b\theta_m,\b u^{(k)}_{\tilde{m}})|
\end{array}$$
as required. Note that $|\b J(\bts^{(k)}_{\tilde{m}}, \b u_m)|=
1/|\b J(\b\theta_m,\b u^{(k)}_{\tilde{m}})|$.

\section{Computation of the acceptance probabilities in the split/combine moves}\label{sec:B} 

The extended formulation of the acceptance rate of the split/combine move in equation (\ref{eq:A_split}), may be expressed as

\begin{eqnarray*}
A &=&\frac{L^*(\bfy|C+1,\b\theta_{C+1})p(\b\theta_{C+1}|C+1)p(C+1)}{L^*(\bfy|C,\b\theta_C)p(\b\theta_C|C)p(C)}\!\times\!\frac{(C+1)!}{C!}\!\times\!\frac{P_c(C+1)/[(C+1)C/2]}{P_s(C)/C}\nonumber\\
&\times&\frac{1}{P_{alloc}}\!\times\!%
\frac{\prod_j b_{\tau\times\bar\lambda,
\tau\times(1-\bar\lambda)}(\lambda_{j|c^*})}{2\,
b_{\alpha,\beta}(u)\, \prod_j b_{\tau\times\lambda_{c^*},
\tau\times(1-\lambda_{c^*})}(\lambda_{j|c_1})\;
b_{\tau\times\lambda_{c^*},
\tau\times(1-\lambda_{c^*})}(\lambda_{j|c_2})}\!\times \!|\b J_{split}|. \nonumber \\
\end{eqnarray*}
In the above expression, the first term represents the product of the likelihood ratio and the prior ratio for
the parameters of the model.
$P_s(C)/C$ and
$P_c(C+1)/[(C+1)C/2]$ are respectively the probabilities to split a
specific class out of $C$ available ones and to combine one of the  
$(C+1)C/2$ possible pairs of classes. The factorials and the
coefficient 2 arise from combinatorial reasoning related to the
label switching problem;  $P_{alloc}$
is the probability that this particular allocation is made, whereas the last
two terms are the product of the proposal ratio and the Jacobian of the transformation from $(C,\b\theta_C)$ to
$(C+1,\b\theta_{C+1})$.

\bibliographystyle{model2-names}
\bibliography{rj_bib}







\end{document}